\theoremstyle{plain}
\newtheorem{proposition}{Proposition}
\newtheorem{theorem}{Theorem}
\newtheorem{assumption}{Assumption}
\newtheorem{remark}{Remark}
\def\bmg{{\bm g}}
\def\bmh{{\bm h}}
\begin{document}

\title{\textbf{Twistor initial data characterisation of pp-waves}}
\author[1]{Edgar Gasper\'in
  \footnote{E-mail address:{\tt edgar.gasperin@tecnico.ulisboa.pt}}}
\author[,2]{Francisco Pais
   \footnote{E-mail address:{\tt franciscopais@tecnico.ulisboa.pt}}}
  \affil[1,2] {CENTRA, Departamento de F\'isica, Instituto Superior
    T\'ecnico IST, Universidade de Lisboa UL, Avenida Rovisco Pais 1,
    1049 Lisboa, Portugal}


\maketitle
  
\begin{abstract}
  This note gives a concise derivation of a twistor-initial-data
  characterisation of pp-wave spacetimes in vacuum. The construction
  is based on a similar calculation for the Minkowski spacetime in
  [Class. Quantum Grav. 28 075010]. The key difference is that for the Minkowski
  spacetime a necessary condition is that
  $\nabla_{A}{}^{A'}\bar{\kappa}_{A'} \neq 0$.  In this note it is
  shown that if $\nabla_{A}{}^{A'}\bar{\kappa}_{A'}=0$ then the development
  is a pp-wave spacetime. Furthermore, it is shown that such
  condition propagates off the initial hypersurface, which, in turn,
  gives a \emph{twistor initial data characterisation of pp-waves}.
\end{abstract}

\textbf{Keywords:} pp-wave spacetimes, Twistor initial data, Killing spinors.


\section*{}

The existence of symmetries encoded through Killing objects (spinors,
vectors, tensors, Killing-Yano tensors) in a spacetime is a strong
constraint that can be exploited for obtaining geometric
characterisations of spacetimes of physical interest. In the context
of the Cauchy problem in General Relativity, whether an initial data
for the Einstein field equations will develop into a spacetime
admitting one of these Killing objects can be determined through
Killing initial data equations. The prototypical examples are the
Killing vector and Killing spinor initial data equations of
\cite{BeiChr97b} and \cite{GarVal08c}, respectively. These initial
data conditions can, in turn, be exploited to obtain initial data
characterisations of particular spacetimes. Examples of this
construction include the characterisation of the Kerr spacetime in
\cite{GarVal08a}, exploiting the (valence-2) Killing spinor initial
data equations, and the simpler characterisation of the Minkowski
spacetime through twistors (valence-1 Killing spinor) in
\cite{BaeVal11a}.  In this note, we show that, by augmenting the
conditions imposed for the standard twistor initial data of
\cite{GarVal08c}, one can derive an initial data characterisation of
pp-wave spacetimes, an approach which can be seen as the spinorial
analogue of the pp-wave initial data characterisation via conformal
Killing vectors of \cite{Gar22}. As pointed out in \cite{Gar22}, an
initial data characterisation of pp-waves can be obtained by employing
Killing spinors, and has been carried out in length in \cite{MurSha20}
---albeit in a different language and with a different scope. In this
note it is shown how a similar characterisation can be obtained 
concisely through a simple modification of twistor initial data
equations of \cite{GarVal08c, BaeVal11a}.

\medskip

Let $(\mathcal{M},\bmg)$ be a 4-dimensional manifold equipped with a Lorentzian metric $\bmg$
of signature $(+,-,-,-)$ and a spinor structure.
Any non-trivial spinor $\kappa_A$ satisfying
\begin{align}
\nabla_{A'(A}\kappa_{B)}=0,
\end{align}
will be referred to as a valence-1 Killing spinor or simply as a \emph{twistor}.
The integrability condition for the last equation is $\Psi_{ABCD} \, \kappa^D=0$
which restricts the spacetime to be of Petrov type N or O.

\medskip

From this point onward, unless otherwise stated, it will be assumed
that the vacuum Einstein field equations (without cosmological
constant) are satisfied. In other words, it will be assumed that
$(\mathcal{M},\bmg)$ is Ricci-flat $R_{ab}=0$, which, in spinorial
Newman-Penrose notation, is encoded through the vanishing of
trace-free Ricci spinor and the Ricci scalar
($\Phi_{AA'BB'}=\Lambda=0$).  One can give a spacetime
characterisation of the Minkowski spacetime through the existence of a
twistor as follows:
 \begin{proposition}[B\"ackdahl \& Valiente-Kroon ]\label{prop1_mink}
   If $\kappa_A$ is a twistor in an asymptotically flat spacetime
   $(\mathcal{M},\bmg)$ and $\eta_A :=\nabla_A{}^{A'}\bar{\kappa}_{A'}
   \neq 0$ at some point $p\in \mathcal{M}$, then the spacetime is the
   Minkowski spacetime.
 \end{proposition}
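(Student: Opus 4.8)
The plan is to show that the hypothesis $\eta_A\neq0$ forces the Petrov type to be O rather than N, after which vacuum conformal flatness together with asymptotic flatness yields the Minkowski spacetime. First I would unpack the twistor equation: since $\nabla_{A'(A}\kappa_{B)}=0$, the derivative $\nabla_{AA'}\kappa_B$ is antisymmetric in $A,B$ and hence
\begin{align}
\nabla_{AA'}\kappa_B=\epsilon_{AB}\pi_{A'},\qquad \pi_{A'}:=\tfrac12\nabla^{B}{}_{A'}\kappa_B .
\end{align}
Conjugating, $\bar\kappa_{A'}$ obeys $\nabla_{AA'}\bar\kappa_{B'}=\epsilon_{A'B'}\bar\pi_A$ with $\bar\pi_A=\overline{\pi_{A'}}$, and a contraction shows that $\eta_A=\nabla_A{}^{A'}\bar\kappa_{A'}$ is a nonzero constant multiple of $\bar\pi_A$. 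Thus the hypothesis $\eta_A\neq0$ at $p$ is equivalent to $\pi_{A'}\neq0$ at $p$, and I will carry $\pi_{A'}$ as the basic object.

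The technical core is to prove that $\pi_{A'}$ is covariantly constant. I would differentiate once more and feed the result into the spinor Ricci identities. In vacuum ($\Phi_{AA'BB'}=\Lambda=0$) one has $\Box_{A'B'}\kappa_C=0$ and $\Box_{AB}\kappa_C=\Psi_{ABCD}\kappa^D$. Substituting $\nabla_{AA'}\kappa_B=\epsilon_{AB}\pi_{A'}$ into the left-hand sides, the first identity collapses to the statement that the symmetric part $\nabla_{C(A'}\pi_{B')}$ vanishes, while the second produces $\epsilon_{C(A}\nabla_{B)X'}\pi^{X'}=\Psi_{ABCD}\kappa^D$. The right-hand side of the latter vanishes by the integrability condition $\Psi_{ABCD}\kappa^D=0$, and tracing out forces the divergence $\nabla_{AX'}\pi^{X'}$ to vanish as well. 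Since both the symmetric and the trace parts of $\nabla_{AA'}\pi_{B'}$ are then zero, I conclude $\nabla_{AA'}\pi_{B'}=0$, and conjugation gives $\nabla_{AA'}\bar\pi_B=0$. I expect this computation to be the main obstacle: it requires keeping the irreducible decomposition of $\nabla\pi$ straight and using the vacuum form of \emph{both} $\Box$ operators, the curvature entering only through the already-exploited relation $\Psi_{ABCD}\kappa^D=0$.

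With $\bar\pi_A$ covariantly constant, and nonzero at $p$ hence nonzero throughout the (connected) manifold, the Ricci identity applied to it yields a second algebraic constraint on the curvature, $\Psi_{ABCD}\bar\pi^D=0$, to be used alongside $\Psi_{ABCD}\kappa^D=0$. To turn these into $\Psi_{ABCD}=0$ I need $\kappa_A$ and $\bar\pi_A$ to be linearly independent. Here I would examine the scalar $s:=\kappa_A\bar\pi^A$: using the two propagation equations one finds $\nabla_{BB'}s=-\bar\pi_B\pi_{B'}$ (up to the sign fixed by the $\epsilon$-conventions), which is a nowhere-vanishing null covector because $\pi_{A'}\neq0$. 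Consequently $s$ has nowhere-vanishing gradient, its zero set has empty interior, and on the open dense complement $\kappa_A$ and $\bar\pi_A$ span the spin space; there the two kernel conditions force $\Psi_{ABCD}=0$, and by continuity $\Psi_{ABCD}\equiv0$, so $(\mathcal{M},\bmg)$ is of Petrov type O.

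Finally, a Ricci-flat, conformally flat spacetime has vanishing Riemann tensor and is therefore locally isometric to the Minkowski spacetime. I would then invoke the asymptotic-flatness hypothesis to upgrade this local statement to a global identification: it fixes the topology and the structure at infinity, ruling out flat quotients and guaranteeing that $(\mathcal{M},\bmg)$ is the Minkowski spacetime. This global step, rather than any single local computation, is the other place where care is needed, since everything preceding it is purely local and would equally admit flat spacetimes of nontrivial topology were the asymptotic condition dropped.
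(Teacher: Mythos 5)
Your proposal is correct and follows essentially the same route as the paper's (sketched) proof: the paper's argument is precisely that $\Psi_{ABCD}\kappa^A=\Psi_{ABCD}\eta^A=0$ and that the adapted dyad $\{\kappa,\eta\}$ then forces $\Psi_{ABCD}=0$, with details deferred to the cited reference. Your write-up simply supplies those details---identifying $\eta_A$ with $\bar\pi_A$, proving its covariant constancy in vacuum via the spinor Ricci identities, obtaining $\Psi_{ABCD}\bar\pi^D=0$, and handling the linear-independence/density issue before invoking flatness plus asymptotic flatness---so it is a fleshed-out version of the same argument rather than a different one.
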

 \begin{proof}
 In short, the proof of this proposition is based on the observation
 that, if $\kappa_A$ is a twistor for which $\eta_{A}\neq 0$, then one
 has $\Psi_{ABCD}\kappa^A=\Psi_{ABCD}\eta^A=0$. These conditions imply
 that one can construct an adapted spin dyad $\{\kappa, \eta\}$ such
 that $\Psi_{ABCD}=0$ ---see \cite{BaeVal11a} for the detailed proof.
 \end{proof}

 \medskip

 A plane-fronted wave with parallel rays, or pp-wave for short, is a
 solution to the Einstein field equations in vacuum characterised by
 the existence of a null covariantly constant vector $k^a$ ---see
 \cite{Gar22}. This in turn implies that there exist a local
 coordinate system $(u,r,x^i)$ with $i =1,2$ for which the metric
 reads
 
    \begin{align*}
g_{H}=2\mathcal{H}(u,x^i)\mbox{d}u^2+2\mbox{d}u\mbox{d}r  - \delta_{ij}\mbox{d}x^i\mbox{d}x^j
    \end{align*}
    where $\bm\delta$ is the 2-dimensional Euclidean metric, and
    \begin{equation} \label{HppwaveHarmonic}
      \delta^{ij}\partial_i\partial_j \mathcal{H}=0.
    \end{equation}
    It is well-known that not every pp-wave spacetime is globally
    hyperbolic ---see \cite{Pen65b}. However, global hyperbolicity of
    this class of spacetimes strongly depends on the behaviour of
    $\mathcal{H}$ at spatial infinity. In \cite{FloSan04} the
    conditions for a pp-wave to be strongly hyperbolic have been
    established.

    \medskip

    The condition $\eta_{A} = 0$ (explicitly excluded in Proposition
    \ref{prop1_mink}) is key for the characterisation of pp-wave
    spacetimes.  To see this and to set up the notation, let
    $H_{A'AB}:=2\nabla_{A'(A}\kappa_{B)}$ and
    $\bar{\eta}_{A'}:=\nabla_{A'}^Q\kappa_Q$. Then, the irreducible
    decomposition of $\nabla_{AA'}\kappa_{B}$ reads
    
 \begin{align}
   2\nabla_{A'A}\kappa_B = H_{A'AB}+ \epsilon_{AB}\bar{\eta}_{A'}.
 \end{align}
 If $\kappa_A$ is a twistor then $H_{A'AB}=0$ and, if in addition,
 $\bar{\eta}_{A'}=0$, then $\nabla_{AA'}\kappa_B=0$. Consequently,
 $\kappa^A \bar{\kappa}^{A'}$ is a covariantly constant vector. Hence,
 the condition $ \nabla_{AQ'} \kappa^{Q'}=0$ ensures that
 Proposition \ref{prop1_mink} does not apply and that
 $\kappa^{A} \bar{\kappa}^{A'}$ is covariantly constant. One then
 concludes that the  spacetime is a pp-wave. 
 This discussion is summarised in the following:
 
  \begin{proposition}\label{prop1_ppwave}
   If $\kappa_A$ is a twistor for which $\bar{\eta}_{A'}
   :=\nabla_{A'}{}^{A}\kappa_{A} = 0$, then $(\mathcal{M},\bmg)$ is a
   pp-wave spacetime for some function $\mathcal{H}$ satisfying
   \eqref{HppwaveHarmonic}.
 \end{proposition}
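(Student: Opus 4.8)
The plan is to show that the hypotheses force $\nabla_{AA'}\kappa_B=0$, and then to identify the resulting covariantly constant null vector as the defining datum of a pp-wave. Starting from the irreducible decomposition

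\begin{align*}
2\nabla_{A'A}\kappa_B = H_{A'AB}+ \epsilon_{AB}\bar{\eta}_{A'},
\end{align*}

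the twistor condition gives $H_{A'AB}=0$, while the extra hypothesis $\bar{\eta}_{A'}=\nabla_{A'}{}^A\kappa_A=0$ kills the remaining term. Hence $\nabla_{AA'}\kappa_B=0$, so $\kappa_A$ is covariantly constant. It then follows immediately that the real null vector $k^a:=\kappa^A\bar{\kappa}^{A'}$ satisfies $\nabla_b k^a=0$, i.e. $k^a$ is covariantly constant and null (null because $\kappa_A\kappa^A=0$ for any spinor). By Proposition \ref{prop1_mink}, the vanishing of $\eta_A$ (equivalently $\bar{\eta}_{A'}$) is precisely the case excluded there, so the spacetime need not be Minkowski.

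The next step is to invoke the standard characterisation: a vacuum spacetime admitting a covariantly constant null vector is a pp-wave. First I would record that $\nabla_b k^a=0$ forces $k^a$ to be a recurrent (indeed parallel) null direction, which by the integrability condition $\Psi_{ABCD}\kappa^D=0$ confines the Weyl spinor to be proportional to $\kappa_A\kappa_B\kappa_C\kappa_D$, i.e. Petrov type N (or O). Combined with vacuum ($\Phi_{AA'BB'}=\Lambda=0$), this is exactly the spinorial signature of pp-waves. I would then cite the local existence of adapted coordinates $(u,r,x^i)$ in which $k^a=(\partial_r)^a$ and the metric takes the stated form $g_H$, the function $\mathcal{H}$ being independent of $r$ because $k^a$ is covariantly constant (Killing and gradient simultaneously).

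Finally I would derive the harmonic condition \eqref{HppwaveHarmonic}. In the adapted coordinates the only nontrivial vacuum Einstein equation reduces to $\delta^{ij}\partial_i\partial_j\mathcal{H}=0$; this is most cleanly obtained by computing the single surviving Ricci component, or equivalently by reading off that the trace-free Ricci spinor $\Phi_{AA'BB'}$ vanishes iff $\mathcal{H}$ is harmonic in the transverse variables $x^i$. This pins down $(\mathcal{M},\bmg)$ as a pp-wave with $\mathcal{H}$ satisfying \eqref{HppwaveHarmonic}.

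The main obstacle I anticipate is not the algebra forcing $\nabla_{AA'}\kappa_B=0$ — that is immediate — but rather making the passage from "admits a covariantly constant null vector in vacuum" to "is a pp-wave in the explicit coordinate form $g_H$" fully rigorous. That step relies on the classical construction of Brinkmann/adapted null coordinates, and the cleanest route is to appeal to the established result (e.g. as in \cite{Gar22}) rather than reconstructing the coordinate adaptation from scratch; the only genuine check is that $\mathcal{H}$ inherits $r$-independence from $\nabla_b k^a=0$ and harmonicity from the vacuum equations.
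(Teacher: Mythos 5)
Your proposal is correct and takes essentially the same route as the paper: the irreducible decomposition $2\nabla_{A'A}\kappa_B = H_{A'AB}+\epsilon_{AB}\bar{\eta}_{A'}$ together with the two hypotheses forces $\nabla_{AA'}\kappa_B=0$, so $\kappa^A\bar{\kappa}^{A'}$ is a covariantly constant null vector, and the Brinkmann form of the metric with harmonic $\mathcal{H}$ then follows from the standard result cited to \cite{Gar22}. The extra detail you give (Petrov type N via the integrability condition, the $r$-independence and harmonicity of $\mathcal{H}$) simply fleshes out the same citation the paper itself relies on.
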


  Proposition \eqref{prop1_ppwave} amounts to a spacetime
  characterisation; one can, however, obtain a characterisation at the
  level of initial data by slightly modifying the \emph{twistor
  initial data equations} of \cite{GarVal08c}. Before doing so, we
  first give a brief discussion of the derivation of the twistor
  initial data equations.  The twistor initial data result of
  \cite{GarVal08c} is based on the following identities which hold in
  Ricci-flat spacetimes:
 \begin{align}
   \square H_{A'AB} & = 2\nabla_{A'(A}\square \kappa_{B)} +
   2\Psi_{AB}{}^{PQ}H_{A'PQ} \label{twistor_prop1}\\ \square \kappa_A
   & = \frac{2}{3}\nabla^{PP'}H_{P'PA}
 \end{align}
 where $\square:=g^{ab}\nabla_a\nabla_b$.  Assume that the
 \emph{twistor-candidate equation}
 \begin{equation}\label{TwistorCandidate}
   \square \kappa_A =0.
 \end{equation}
 holds.   Then equation \eqref{twistor_prop1} reduces to
 \begin{equation}
   \square \tensor{H}{_{A'AB}} = 2\tensor{\Psi}{_{AB}^{PQ}}
   \tensor{H}{_{A'PQ}}
 \end{equation} so that if one
 provides trivial initial data for $H_{A'AB}$ on a Cauchy hypersurface
 $\Sigma_0$:
 \begin{equation}\label{Twistor_ID_Raw}
 H_{A'AB}|_{\Sigma_0}=0, \qquad \nabla_{EE'}H_{A'AB}|_{\Sigma_0}=0,
 \end{equation}
 then, by local existence and uniqueness of symmetric hyperbolic
 systems, one has that $H_{A'AB}=0$ on a spacetime neighbourhood
 $\mathcal{U} \subset \mathcal{D}^{+}(\Sigma_0)$, where $D^{+}(\Sigma_0)$
 denotes the future domain of dependence of $\Sigma_0$.  The initial
 conditions \eqref{Twistor_ID_Raw} can be translated in terms of
 $\kappa_A$ as:
 \begin{align}\label{Twistor_ID}
   \nabla_{A'(A}\kappa_{B)}|_{\Sigma_0}=0, \qquad
   \nabla_{EE'}\nabla_{A'(A}\kappa_{B)}|_{\Sigma_0}=0,
  \end{align}
 and are regarded as initial data constraints for equation
 \eqref{TwistorCandidate}. To obtain conditions intrinsic to $\Sigma_0$
 one needs to perform a 1+3 spinor split.  Although this is analogous
 to the standard 3+1 split in tensors, in general, the spacespinor
 split is not adapted to a foliation but rather to a congruence of
 timelike curves with tangent $\tau^{AA'}$ normalised so that
 $\tau^{AA'}\tau_{AA'}=2$. The Levi-Civita covariant derivative
 of any spinor $\mu_{C}$
splits as:
 \begin{equation}
\nabla_{AA'}\mu_{C} = \tfrac{1}{2} \tau _{AA'} \nabla_{\tau }\mu_{C} -
\tau ^{B}{}_{A'} \mathcal{D} _{BA}\mu _{C}
 \end{equation}
 where $\nabla_{\tau}:=\tau^{AA'}\nabla_{AA'}$ and $\mathcal{D}_{AB}:=
 \tau_{(A}{}^{A'}\nabla_{B)A'}$ is the Sen connection relative to
 $\tau^a$ ---see \cite{CFEBook}.  The spacetime covariant derivative of
 $\tau^a$ is determined in terms of the acceleration $\chi_{AB}$ and
 Weingarten spinors $\chi_{ABCD}$ through:
 \begin{equation}
\nabla_{AA'}\tau _{CC'} = - \tfrac{1}{\sqrt{2}} \chi _{CD} \tau _{AA'} \tau
^{D}{}_{C'} + \sqrt{2} \chi _{ABCD} \tau ^{B}{}_{A'} \tau ^{D}{}_{C'}
 \end{equation}
 If $\tau^{AA'}$ is hypersurface orthogonal then
 $\chi_{ABCD}=\chi_{AB(CD)}$ and corresponds to the second fundamental
 form of a foliation $\Sigma_{\tau}$. In addition, the 3-dimensional
 Levi-Civita connection on $\Sigma_{\tau}$ is given by,
 \begin{equation}
   D_{AB} \mu_{C} = \mathcal{D}_{AB}\mu_{C} +
   \frac{1}{\sqrt{2}}\chi_{(AB)C}{}^{Q}\mu_{Q}
 \end{equation}
 Using the spacespinor formalism, in \cite{BaeVal11a}, it was shown that the
 equations \eqref{Twistor_ID} are reduced to the following
 conditions:
 \begin{subequations}
 \begin{eqnarray}
   &\mathcal{D} _{(AB}\kappa _{C)} =0, \label{spatialtwistor} \\ &
   \Psi_{ABCD}\kappa^A = 0 \label{buchdalID} \\ & \nabla_\tau \kappa_A
   = -\frac{2}{3}\mathcal{D}_{A}{}^{B}\kappa_B \label{prescriptionTimeDer}.
 \end{eqnarray}
 \end{subequations}
 
 The conditions \eqref{spatialtwistor} and \eqref{buchdalID} are
 called the \emph{twistor initial data equations}. Notice that
 $\Psi_{ABCD}$ on $\Sigma_0$ can be written in terms of its electric and
 magnetic part respect to $\tau^a$ which in turn can be expressed in
 terms of the initial data set $(\Sigma_0, \bmh, \bm\chi)$ where $\bmh$
 and $\bm\chi$ are the first and second fundamental forms of $\Sigma_0$
 ---see \cite{BaeVal11a} for further details and \cite{GarVal08c} for
 an alternative way of expressing these conditions.  If the twistor
 initial data equations are solved for $\kappa_A$ on $\Sigma_0$ and its
 time derivative on $\Sigma_0$ is prescribed according to equation
 \eqref{prescriptionTimeDer} one obtains the initial data for the
 twistor-candidate equation \eqref{TwistorCandidate} that ensures that
 $\kappa^A$ is an actual twistor in $\mathcal{U}\subset
 \mathcal{D}^{+}(\Sigma_0)$.

 This discussion is summarised in the following propositions.
 
 \begin{proposition}[Garc\'ia-Parrado \& Valiente-Kroon]\label{prop:TwistorInitialData_raw}
    If a spinor $\kappa^A$ satisfies the conditions \eqref{Twistor_ID}
     and solves the vacuum twistor candidate wave equation
    \eqref{TwistorCandidate} then $\kappa^A$ is twistor in a Ricci-flat open set
    $\;\mathcal{U}\subset \mathcal{D}^{+}(\Sigma_0)$.
  \end{proposition}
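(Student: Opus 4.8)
The plan is to prove that the symmetric spinor $H_{A'AB}:=2\nabla_{A'(A}\kappa_{B)}$ vanishes on a suitable neighbourhood of $\Sigma_0$, since $H_{A'AB}=0$ is by definition the statement that $\kappa^A$ is a twistor. The engine of the argument is the Ricci-flat identity \eqref{twistor_prop1}. Because the hypothesis includes the twistor-candidate equation \eqref{TwistorCandidate}, the inhomogeneous term $2\nabla_{A'(A}\square\kappa_{B)}$ in \eqref{twistor_prop1} drops out identically, leaving the closed, homogeneous second-order equation
\begin{equation*}
\square H_{A'AB} = 2\Psi_{AB}{}^{PQ}H_{A'PQ}.
\end{equation*}
The crucial structural point to emphasise is that this equation involves no derivatives of $\kappa_A$ beyond those already repackaged inside $H_{A'AB}$, so it is a self-contained linear wave equation for $H_{A'AB}$ with the Weyl spinor entering only as a zeroth-order coefficient.

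Next I would translate the initial conditions into data for $H_{A'AB}$. By the very definition of $H_{A'AB}$, the first condition in \eqref{Twistor_ID} is precisely $H_{A'AB}|_{\Sigma_0}=0$ and the second is $\nabla_{EE'}H_{A'AB}|_{\Sigma_0}=0$; that is, the hypotheses are equivalent to prescribing the trivial Cauchy data \eqref{Twistor_ID_Raw}. Thus one is confronted with a homogeneous linear hyperbolic equation subject to vanishing Cauchy data on $\Sigma_0$, a situation in which $H_{A'AB}\equiv 0$ is manifestly a solution.

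To turn this into a rigorous conclusion I would recast the scalar wave equation as a first-order system, adjoining the first covariant derivatives $\nabla_{EE'}H_{A'AB}$ as independent unknowns. The principal part is that of the d'Alembertian $\square$, so the reduced system is symmetric hyperbolic and the Weyl term contributes only a smooth zeroth-order source. Local existence and uniqueness for symmetric hyperbolic systems then forces the solution with trivial data to be the zero solution on an open set $\mathcal{U}\subset\mathcal{D}^{+}(\Sigma_0)$. Unwinding the definition, $\nabla_{A'(A}\kappa_{B)}=0$ on $\mathcal{U}$, which is exactly the assertion that $\kappa^A$ is a twistor there.

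The main obstacle is the analytic step: confirming that the second-order equation for $H_{A'AB}$ admits a first-order symmetric hyperbolic (or otherwise well-posed) reduction on the curved background, so that the associated domain-of-dependence and uniqueness theory applies. Once this reduction is established the conclusion is immediate, since the equation is homogeneous and the data vanish. Throughout, one must keep track of the fact that identity \eqref{twistor_prop1} is valid only where $R_{ab}=0$; this is the reason the result is asserted on a \emph{Ricci-flat} open set $\mathcal{U}$ inside the future domain of dependence rather than on all of $\mathcal{D}^{+}(\Sigma_0)$.
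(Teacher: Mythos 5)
Your proposal is correct and follows essentially the same route as the paper: it invokes the Ricci-flat identity \eqref{twistor_prop1}, uses the twistor-candidate equation \eqref{TwistorCandidate} to reduce it to the homogeneous wave equation $\square H_{A'AB}=2\Psi_{AB}{}^{PQ}H_{A'PQ}$, identifies the hypotheses \eqref{Twistor_ID} with the trivial Cauchy data \eqref{Twistor_ID_Raw}, and concludes $H_{A'AB}=0$ on $\mathcal{U}\subset\mathcal{D}^{+}(\Sigma_0)$ by uniqueness for symmetric hyperbolic systems. Your explicit remark about the first-order symmetric hyperbolic reduction simply spells out the step the paper cites implicitly, so there is no substantive difference.
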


  \begin{proposition}[B\"ackdahl \& Valiente-Kroon]\label{prop:TwistorInitialData_refined}
    Let $(\Sigma_0,\bmh,\bm\chi)$ be an initial data set for the vacuum
    Einstein field equations (without cosmological constant) where
    $\bmh$ is the 3-metric on a spacelike Cauchy hypersurface $\Sigma_0$
    and $\bm\chi$ is the second fundamental form.  If there exist a non-trivial
    spinor $\kappa^A_{*}$ satisfying the twistor initial data
    equations \eqref{spatialtwistor}-\eqref{buchdalID} on $\Sigma_0$,
    then the spacetime development of $(\Sigma_0,\bmh,\bm\chi)$ will
    posses a twistor $\kappa_A$ in an open set $\mathcal{U}\subset
    \mathcal{D}^{+}(\Sigma_0)$.  The twistor $\kappa^A$ is obtained by
    solving the twistor candidate equation \eqref{TwistorCandidate}
    with initial data $(\kappa^A_{*}, \nabla_\tau \kappa^A_{*})$ on
    $\Sigma_0$ prescribed according to equations
    \eqref{spatialtwistor}-\eqref{prescriptionTimeDer}.
  \end{proposition}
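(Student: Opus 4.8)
The plan is to reduce the statement to Proposition \ref{prop:TwistorInitialData_raw}. Concretely, I would show that the spinor $\kappa_A$ obtained by solving the twistor-candidate equation \eqref{TwistorCandidate} with the prescribed Cauchy data $(\kappa^A_{*},\nabla_\tau\kappa^A_{*})$ satisfies the two spacetime conditions \eqref{Twistor_ID} on $\Sigma_0$; Proposition \ref{prop:TwistorInitialData_raw} then guarantees that $H_{A'AB}=0$ throughout $\mathcal{U}\subset\mathcal{D}^{+}(\Sigma_0)$, i.e.\ that $\kappa_A$ is a genuine twistor. The argument is carried out entirely in the space-spinor formalism, decomposing every spacetime derivative of $H_{A'AB}$ into a part along $\tau^a$ and a part intrinsic to $\Sigma_0$, and exploiting the irreducible decomposition of $\mathcal{D}_{AB}\kappa_C$. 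Thus the task is essentially to prove the converse of the reduction recorded above, namely that \eqref{spatialtwistor}, \eqref{buchdalID} and \eqref{prescriptionTimeDer} together imply \eqref{Twistor_ID}.

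For the first condition in \eqref{Twistor_ID}, I would insert the space-spinor split $\nabla_{AA'}\kappa_B=\tfrac{1}{2}\tau_{AA'}\nabla_\tau\kappa_B-\tau^{C}{}_{A'}\mathcal{D}_{CA}\kappa_B$ into $H_{A'AB}=2\nabla_{(A|A'|}\kappa_{B)}$ and symmetrise over $A,B$. Decomposing $\mathcal{D}_{CA}\kappa_C$ into its totally symmetric part plus trace terms, the twistor initial data equation \eqref{spatialtwistor} annihilates $\mathcal{D}_{(CA}\kappa_{B)}$, so that $H_{A'AB}|_{\Sigma_0}$ collapses (up to a numerical factor) to $\tau_{(A|A'|}V_{B)}$, where $V_B:=\nabla_\tau\kappa_B+\tfrac{2}{3}\mathcal{D}_{B}{}^{Q}\kappa_Q$. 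Contracting with $\tau^{AA'}$ shows that $\tau_{(A|A'|}V_{B)}=0$ if and only if $V_B=0$, which is exactly the prescription \eqref{prescriptionTimeDer}; hence $H_{A'AB}|_{\Sigma_0}=0$.

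For the second condition I would again split $\nabla_{EE'}H_{A'AB}$ into a $\nabla_\tau$ part and a Sen-connection part. The tangential piece $\mathcal{D}_{EF}H_{A'AB}|_{\Sigma_0}$ vanishes because, by the previous step, $H_{A'AB}$ vanishes identically on $\Sigma_0$ (the difference between $\mathcal{D}_{EF}$ and the intrinsic connection $D_{EF}$ only adds terms built from $\bm\chi$ acting on $H_{A'AB}$, which vanish there as well). The transverse piece $\nabla_\tau H_{A'AB}|_{\Sigma_0}$ is the heart of the proof. Using the twistor-candidate equation \eqref{TwistorCandidate} to trade the second normal derivative $\nabla_\tau^{2}\kappa_A$ for intrinsic derivatives of $\kappa$ plus a curvature term, and commuting $\nabla_\tau$ through the symmetrised derivative defining $H_{A'AB}$, this quantity reorganises into intrinsic derivatives of the initial data equation \eqref{spatialtwistor} and of the combination $V_B$ — all vanishing on $\Sigma_0$ — together with a single Weyl-curvature contraction built from $\Psi_{AB}{}^{PQ}$ and $\kappa$. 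This last term is precisely where the Buchdahl-type condition \eqref{buchdalID}, $\Psi_{ABCD}\kappa^A=0$, is indispensable: it removes the curvature contribution and forces $\nabla_\tau H_{A'AB}|_{\Sigma_0}=0$. The vacuum constraints carried by $(\Sigma_0,\bmh,\bm\chi)$ enter here, since they are what allow $\Psi_{ABCD}$ on $\Sigma_0$ — and hence the intrinsic meaning of \eqref{buchdalID} — to be expressed through the electric and magnetic parts of the Weyl tensor.

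With both conditions in \eqref{Twistor_ID} verified for the evolved spinor, Proposition \ref{prop:TwistorInitialData_raw} applies and yields $H_{A'AB}=0$ on $\mathcal{U}\subset\mathcal{D}^{+}(\Sigma_0)$, i.e.\ $\kappa_A$ is a twistor there. The step I expect to be the main obstacle is the transverse derivative $\nabla_\tau H_{A'AB}|_{\Sigma_0}$: keeping track of the curvature terms generated when commuting $\nabla_\tau$ past the symmetrised derivative in $H_{A'AB}$, and arranging them so that the only surviving piece is exactly the contraction annihilated by \eqref{buchdalID}, is the delicate part. Everything else amounts to bookkeeping with the irreducible decomposition of space-spinors.
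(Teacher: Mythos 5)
Your proposal takes essentially the same route as the paper: the paper also obtains this proposition by combining the wave-equation/uniqueness argument of Proposition \ref{prop:TwistorInitialData_raw} with the space-spinor reduction showing that the intrinsic conditions \eqref{spatialtwistor}--\eqref{prescriptionTimeDer} are equivalent to the spacetime data \eqref{Twistor_ID} (a reduction the paper quotes from \cite{BaeVal11a} rather than rederiving). Your sketch of that reduction, including the identification of $\nabla_{\tau}H_{A'AB}|_{\Sigma_0}$ as the place where the Buchdahl-type condition \eqref{buchdalID} is needed to cancel the Weyl contraction, is consistent with that reference, so the argument is sound.
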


  Generally, initial data for the Einstein field equations satisfying
  conditions \eqref{spatialtwistor}-\eqref{prescriptionTimeDer} is not
  sufficient to ensure that the development will be a pp-wave
  spacetime, as Proposition \ref{prop:TwistorInitialData_refined}
  does not guarantee that $\nabla_{A'Q}\kappa^Q=0$.  To see whether
  imposing further conditions conditions on the initial data is enough
  so that $\nabla_{A'Q}\kappa^Q=0$ on $\mathcal{U}\subset
  \mathcal{D}^{+}(\Sigma_0)$, it suffices to construct a propagation
  equation for the quantity $\bar{\eta}_{A'}:=\nabla_{A'}^Q\kappa_Q$.
  Commuting covariant derivatives and using the spinorial-Ricci
  identities, a calculation similar to that leading to equation
  \eqref{twistor_prop1} gives
  \begin{align*}\label{propEtaIdentity}
    \square \bar{\eta}_{A'} = -2 \Lambda \bar{\eta}_{A'} +
    8\kappa^A\nabla_{AA'}\Lambda.
  \end{align*}
  Thus, if $\Lambda=0$,
   \begin{equation}\label{curlfree_prop_equation}
    \square \bar{\eta}_{A'} = 0.
  \end{equation}
  Hence, by providing trivial initial data for equation
  \eqref{curlfree_prop_equation}:
  \begin{align}\label{Curl_free_ID_Conds}
\bar{\eta}_{A'}|_{\Sigma_0}=0, \qquad
\nabla_{EE'}\bar{\eta}_{A'}|_{\Sigma_0}=0,
  \end{align}
  we get $\bar{\eta}_{A'}=0$ on $\mathcal{U}\subset
  \mathcal{D}^{+}(\Sigma_0)$.  Thus, by augmenting the requirements on
  the initial data implied by equation \eqref{Twistor_ID_Raw} to
  include those encoded in equation \eqref{Curl_free_ID_Conds}, one
  ensures that $\nabla_{A'Q}\kappa^Q=0$ on $\mathcal{U} \subset
  \mathcal{D}^{+}(\Sigma_0)$.  To translate the latter to intrinsic
  conditions on $\Sigma_0$ one needs to perform a 1+3 spacespinor split
  as follows. Let $\zeta _{A} = \bar{\eta} _{A'} \tau
  _{A}{}^{A'}$. Equivalently, $\bar{\eta}_{B'}= - \zeta _{A} \tau
  ^{A}{}_{B'}$.  Then the conditions \eqref{Curl_free_ID_Conds}, are
  equivalent to impose $\zeta_A=0$ and $\nabla_\tau \zeta_A=0$.  A
  direct calculation shows that
  \begin{align}
   \zeta _{A} = - \tfrac{1}{2} \nabla_{\tau }\kappa _{A} + \mathcal{D}
   _{A}{}^{B}\kappa _{B}.
  \end{align}
  Hence using the condition \eqref{prescriptionTimeDer} one gets
   \begin{align}\label{eq:zetaToDerkappa}
  \zeta_A = \frac{4}{3}\mathcal{D} _{A}{}^{B}\kappa_{B}.
  \end{align}
   Similarly, a direct calculation shows that
   \begin{equation}
     \nabla_{\tau }\bar{\eta} _{A'} = \tfrac{4}{3} \tau ^{A}{}_{A'}
     \nabla_{\tau }\mathcal{D} _{AB}\kappa ^{B} - \tfrac{4 \sqrt{2}}{3} \chi
     ^{A}{}_{C} \tau ^{C}{}_{A'} \mathcal{D} _{AB}\kappa ^{B}.
   \end{equation}
   Now, recall that the commutator $[\nabla_{\tau },\mathcal{D} _{AB}]$
   acting on any spinor $\mu_C$
   reads
   \begin{align}
     [\nabla_{\tau },\mathcal{D} _{AB}]\mu_{C} & = \Psi _{ABCD} \mu^{D}   - 2\Lambda  \mu_{(A} \epsilon _{B)C}  - \Phi _{CDA'B'} \mu^{D} \tau _{A}{}^{A'} \tau _{B}{}^{B'}
     - \tfrac{1}{\sqrt{2}}\chi _{AB} \nabla_{\tau  }\mu_{C} \nonumber \\ & + \tfrac{2}{\sqrt{2}}\chi _{(A}{}^{D} \mathcal{D} _{B)D}\mu_{C}
     - \sqrt{2} \chi _{(AB)DF} \mathcal{D} ^{DF}\mu_{C}
   \end{align}
   Taking $\mu_{C}=\kappa_{C}$
   and using equations \eqref{prescriptionTimeDer} and \eqref{eq:zetaToDerkappa},
   a long but straightforward
   calculation renders   
   \begin{align}
   	\nabla_{\tau  }\zeta _{A} & = - \sqrt{2} \chi _{AB} \zeta ^{B} +
   	4 \Lambda  \kappa _{A} - \tfrac{2}{3} \sqrt{2} \zeta ^{B} 
   	\chi _{A}{}^{C}{}_{BC} - \tfrac{4}{3} \Phi _{BCA'B'} \kappa^{B} \tau _{A}{}^{A'} \tau ^{CB'} \nonumber
        \\ & + \tfrac{2}{3} \mathcal{D}_{AB}\zeta^{B} + \tfrac{2 \sqrt{2}}{3} \chi ^{BC}
        \mathcal{D}_{(AB}\kappa _{C)} - \tfrac{4}{3} \sqrt{2} \chi _{A}{}^{BCD} \mathcal{D} _{(BC}\kappa _{D)}
   \end{align}
   Assuming vacuum and using condition \eqref{spatialtwistor},
   simplifies the latter expression to
     \begin{align}
     \nabla_{\tau }\zeta _{A} & = - \sqrt{2}\chi _{AB} \zeta ^{B} -
     \tfrac{2}{3} \sqrt{2} \zeta ^{B} \chi _{A}{}^{C}{}_{BC} +
     \tfrac{2}{3} \mathcal{D} _{AB}\zeta ^{B} .
     \end{align}
     Then, with these assumptions, the conditions $\zeta_A=\nabla_\tau
     \zeta_A=0$ reduce to the requirement that
     $\mathcal{D} _{A}{}^{B}\kappa_{B}=0$.  Altogether the condition
     $\mathcal{D} _{(AB}\kappa _{C)} =0$ and $\mathcal{D}
     _{A}{}^{B}\kappa_{B}=0$ can be encoded simply as $D_{AB}\kappa_{C}=0$.

     \begin{remark}
       \emph{
     Observe that $\kappa_A=0$ trivially solves the condition $D_{AB}\kappa_{C}=0$.
     Hence, in the sequel we will be concerned only with non-trivial initial data,
     namely a spinor $\kappa_{*A} \neq 0$ everywhere on $\Sigma_0$ that satisfies
     $D_{AB}\kappa_{*C}=0$.     }
     \end{remark}
     
     This discussion is summarised in the following:
    \begin{proposition}\label{prop:Twistor_PP_wave_InitialData}
    Let $(\Sigma_0,\bmh,\bm\chi)$ be an initial data set for the
    vacuum Einstein field equations (without cosmological constant)
    where $\bmh$ is the 3-metric on a Cauchy hypersurface $\Sigma_0$
    and $\bm\chi$ is the second fundamental form.  If the conditions
    \vspace{-.75cm}\begin{center}
    	\begin{equation}
    		\mathcal{D} _{AB}\kappa _{C}
                =0 \label{TwistorAugmented_ID_1}
    	\end{equation}\vspace{-5mm}
    	\begin{equation}
    		\Psi_{ABCD}\kappa^A = 0. \label{TwistorAugmented_ID_2}
    	\end{equation}
    \end{center}
  are satisfied by a non-trivial spinor $\kappa_{*}^A$ on $\Sigma_0$,
  then a covariantly constant spinor $\kappa^A$ in some open set
  $\mathcal{U}\subset \mathcal{D}^{+}(\Sigma_0)$ is obtained by
  solving the twistor candidate equation \eqref{TwistorCandidate} with
  initial data $(\kappa^A_{*}, \nabla_\tau \kappa^A_{*})$ on
  $\Sigma_0$ prescribed according to equation
  \eqref{prescriptionTimeDer}.
    \end{proposition}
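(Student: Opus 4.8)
The plan is to read condition \eqref{TwistorAugmented_ID_1} as simultaneously encoding the twistor initial data equation \eqref{spatialtwistor} and the extra trace condition that forces $\bar{\eta}_{A'}$ to carry trivial Cauchy data, and then to invoke the two propagation results established above on one and the same solution of \eqref{TwistorCandidate}. First I would decompose the spinor $\mathcal{D}_{AB}\kappa_C$, which is symmetric in $AB$, into its totally symmetric part and its single trace; the vanishing $\mathcal{D}_{AB}\kappa_C = 0$ on $\Sigma_0$ is then equivalent to the pair $\mathcal{D}_{(AB}\kappa_{C)} = 0$ and $\mathcal{D}_A{}^B\kappa_B = 0$. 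The first member is precisely \eqref{spatialtwistor}, so together with \eqref{TwistorAugmented_ID_2} (which is \eqref{buchdalID}) the hypotheses of Proposition \ref{prop:TwistorInitialData_refined} are met. Solving \eqref{TwistorCandidate} with Cauchy data $(\kappa_*^A, \nabla_\tau \kappa_*^A)$ prescribed through \eqref{prescriptionTimeDer} therefore produces a genuine twistor $\kappa_A$, i.e. $H_{A'AB} = 2\nabla_{A'(A}\kappa_{B)} = 0$, on an open set $\mathcal{U}_1 \subset \mathcal{D}^{+}(\Sigma_0)$.

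Next I would exploit the trace condition $\mathcal{D}_A{}^B\kappa_B = 0$ to run the $\bar{\eta}_{A'}$ argument on this same $\kappa_A$. By \eqref{eq:zetaToDerkappa} the trace condition is equivalent to $\zeta_A = 0$ everywhere on $\Sigma_0$, and hence its intrinsic spatial derivatives vanish as well. Substituting $\zeta_A = 0$ into the reduced evolution equation for $\nabla_\tau \zeta_A$ obtained above --- valid in vacuum once \eqref{spatialtwistor} holds --- makes every term on its right-hand side, each of which is a pointwise multiple of $\zeta$ or the spatial derivative $\mathcal{D}_{AB}\zeta^B$, vanish; thus $\nabla_\tau \zeta_A = 0$ as well. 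In spacespinor form this is exactly the statement that $\bar{\eta}_{A'}$ obeys the trivial Cauchy conditions \eqref{Curl_free_ID_Conds}. Since $\kappa_A$ solves \eqref{TwistorCandidate} and the background is Ricci-flat, $\bar{\eta}_{A'}$ satisfies the homogeneous wave equation \eqref{curlfree_prop_equation}, and uniqueness for this symmetric hyperbolic system yields $\bar{\eta}_{A'} = 0$ on an open set $\mathcal{U}_2 \subset \mathcal{D}^{+}(\Sigma_0)$.

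Restricting to $\mathcal{U} := \mathcal{U}_1 \cap \mathcal{U}_2$, which is again an open subset of $\mathcal{D}^{+}(\Sigma_0)$, both $H_{A'AB} = 0$ and $\bar{\eta}_{A'} = 0$ hold. The irreducible decomposition $2\nabla_{A'A}\kappa_B = H_{A'AB} + \epsilon_{AB}\bar{\eta}_{A'}$ then collapses to $\nabla_{A'A}\kappa_B = 0$ on $\mathcal{U}$, so that $\kappa_A$ is covariantly constant there; by Proposition \ref{prop1_ppwave} this certifies that the development is a pp-wave, which is the assertion.

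The step I expect to require the most care is the second paragraph. One must keep track of the fact that a single $\kappa_A$ --- the unique solution of \eqref{TwistorCandidate} --- has to drive both propagation systems simultaneously, so that \eqref{twistor_prop1} and \eqref{curlfree_prop_equation}, each of which presupposes $\square \kappa_A = 0$, may be used together. More delicately, the vanishing of the second-order datum $\nabla_\tau \zeta_A$ genuinely relies on $\zeta$ vanishing on all of $\Sigma_0$ (so that the spatial derivative $\mathcal{D}_{AB}\zeta^B$ drops out), not merely at an isolated point; this is why the hypothesis, and the accompanying remark, insist that the conditions hold for a spinor that is non-trivial throughout $\Sigma_0$.
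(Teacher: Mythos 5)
Correct, and essentially the paper's own argument: you split $\mathcal{D}_{AB}\kappa_C=0$ into its totally symmetric part (the twistor initial data equation \eqref{spatialtwistor}) plus the trace $\mathcal{D}_A{}^B\kappa_B=0$, use the latter together with \eqref{eq:zetaToDerkappa} and the reduced evolution equation for $\zeta_A$ to obtain the trivial Cauchy data \eqref{Curl_free_ID_Conds} for $\bar\eta_{A'}$, and then propagate $H_{A'AB}=0$ and $\bar\eta_{A'}=0$ through their respective wave equations before recombining them in the irreducible decomposition of $\nabla_{AA'}\kappa_B$ --- exactly the route of the discussion preceding the proposition. The only point to flag is that your closing appeal to Proposition \ref{prop1_ppwave} proves more than is claimed: the proposition asserts only that $\kappa^A$ is covariantly constant, while the pp-wave conclusion additionally requires $\kappa^A \neq 0$ on $\mathcal{U}$, which the paper handles separately via the continuity and regularity arguments of Remarks \ref{Remark_continuity} and \ref{ID-regularity_remark} leading to Theorem \ref{prop:Twistor_PP_wave_InitialData_refined}.
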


    
    \begin{remark}[Continuity argument] \label{Remark_continuity}\emph{
      Notice that proposition \ref{prop:Twistor_PP_wave_InitialData}
      does not exclude the possibility that the covariantly constant
      spinor $\kappa^A$ becomes trivial ($\kappa^A=0$) at some point
      in the evolution. In other words, although by assumption the
      initial data for the wave equation \eqref{TwistorCandidate} is
      non-trivial, $\kappa_{*}^{A} \neq 0$, this condition alone does
      not guarantee that $\kappa^A \neq 0$ in the whole domain of
      dependence $\mathcal{D}^{+}(\Sigma_0)$.  However, if the
      solution $\kappa_A$ is a classical solution ($C^2$) of the
      wave equation \eqref{TwistorCandidate}, then \emph{by
      continuity}, it follows that in a small spacetime
      neighbourhood of the initial hypersurface $\Sigma_0$ the
      solution is non-trivial: $\kappa_A \neq 0$ in $\mathcal{U}\subset \mathcal{D}^{+}(\Sigma_0)$.
      Therefore, although
      we cannot control the size of the spacetime neighbourhood
      $\mathcal{U}$ where the solution is non-trivial, it is clear by
      continuity that, by shrinking the size of the set
      $\mathcal{U}\subset \mathcal{D}^{+}(\Sigma_0)$, proposition
      \eqref{prop:Twistor_PP_wave_InitialData} implies, in conjuction
      with proposition \ref{prop1_ppwave}, that the development of
      $(\Sigma_0,\bmh,\bm\chi)$ will be a pp-wave spacetime on a small
      spacetime neighbourhood $\mathcal{U}$ close to the initial
      hypersurface.}
    \end{remark}

    The requirement on the regularity of $\kappa_A$
    (needed for the continuity argument of Remark
    \ref{Remark_continuity}) can be transformed to a condition at
    the level of initial data by employing
    a standard existence and uniqueness theorem for wave equations.
    Although there could be other options, in the following we will
    make use of the local existence and uniqueness result for wave
    equations of \cite{HugKatMar77} with initial data
    in some suitable Sobolev space $H^m$.  We use this result in the form
    presented in Theorem 2 in Appendix E of \cite{GasVal15}.

    \begin{remark}[Regularity of initial data]\label{ID-regularity_remark}
       \emph{ Introduce some local coordinates $x=(x^\mu)=(\tau,x^i)$
       in $\mathcal{U}$, with $\mu=0,1,2,3$ and $i=1,2,3$ so that
       $\Sigma_0$ is described by $\tau=0$. Denote the components of
       $\kappa_A$ as $\bm\kappa$. Then, the wave equation
       \eqref{TwistorCandidate} written in local coordinates reads:
      \begin{equation*}
        g^{\mu\nu}\partial_\mu\partial_\nu \bm\kappa =F(x,\bm\kappa,
        \partial\bm\kappa),
      \end{equation*}
      where $F$ is linear in $\bm\kappa$ and $\partial
      \bm\kappa$ and $g_{\mu\nu}$ is a Lorentzian metric.  We will
      consider non-trivial solutions $\bm\kappa_{*}$ to the initial
      data equations \eqref{TwistorAugmented_ID_1} and
      \eqref{TwistorAugmented_ID_2} which satisfy the following
      regularity conditions.  For $m \geq 4$ :
      \begin{subequations}\label{regcond}
      \begin{flalign}
      & \bm\kappa_{*}\in H^m(\Sigma_0, \mathbb{C}^2) \quad \text{and}
        \quad \partial_\tau\bm\kappa_{*}\in H^m(\Sigma_0, \mathbb{C}^2),
        \\ & (\bm\kappa_{*},\; \partial_\tau\bm\kappa_{*}) \in
        D_{\delta} \qquad \text {for some $\delta>0$ where} \nonumber
        \\ & D_\delta \equiv \big\{ (w_1,w_2)\in
        H^m(\Sigma_0,\mathbb{C}^2)\times H^m(\Sigma_0,\mathbb{C}^2) \;
        | \; \delta < | \det g_{\mu\nu} | \big\}.
      \end{flalign}
      \end{subequations}
      With these assumptions, then using
      point (i) of Theorem 2 of \cite{GasVal15} one concludes that
      there exists $T>0$ and a unique solution to the Cauchy problem
      defined on $[0,T)\times \Sigma_0$ such that
    \[\bm \kappa \in C^{m-2}([0,T)\times \Sigma_0, \mathbb{C}^2).  \]
      }
 \end{remark}

    Combining propositions \ref{prop:Twistor_PP_wave_InitialData} 
    and \ref{prop1_ppwave}, and aided with the discussion of Remarks \ref{Remark_continuity} and
    \ref{ID-regularity_remark}
    one obtains the following:
    
    \begin{theorem}\label{prop:Twistor_PP_wave_InitialData_refined}
    Let $(\Sigma_0,\bmh,\bm\chi)$ be an initial data set for the
    vacuum Einstein field equations (without cosmological constant)
    where $\bmh$ is the 3-metric on a Cauchy hypersurface $\Sigma_0$
    and $\bm\chi$ is the second fundamental form.  If the conditions
    \vspace{-.75cm}\begin{center}
    	\begin{equation}
    		\mathcal{D} _{AB}\kappa _{C}
                =0 \label{TwistorAugmented_ID_1}
    	\end{equation}\vspace{-5mm}
    	\begin{equation}
    		\Psi_{ABCD}\kappa^A = 0. \label{TwistorAugmented_ID_2}
    	\end{equation}
    \end{center}
    are satisfied by a non-trivial spinor $\kappa_{*}^A$ on
    $\Sigma_0$ satisfying the regularity conditions of Remark \ref{ID-regularity_remark}.
    Then there exist a open set $\mathcal{U}\subset
    \mathcal{D}^{+}(\Sigma_0)$ ---a possibly very small spacetime
    neighbourhood of the initial hypersurface $\Sigma_0$--- for which
    the development of $(\Sigma_0,\bmh,\bm\chi)$ on $\mathcal{U}$ is
    a pp-wave spacetime for some function $\mathcal{H}$ satisfying
    equation \eqref{HppwaveHarmonic} in $\mathcal{U}$.
    \end{theorem}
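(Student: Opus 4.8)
The plan is to assemble the statement from the machinery already developed, since all the analytic and geometric content has been prepared in the preceding propositions and remarks. The theorem is, in essence, a packaging of three ingredients: a local existence/uniqueness result for the twistor candidate equation, the propagation of covariant constancy off $\Sigma_0$ established in Proposition \ref{prop:Twistor_PP_wave_InitialData}, and the spacetime pp-wave characterisation of Proposition \ref{prop1_ppwave}. Accordingly, I would not attempt a self-contained computation but rather chain these results together, taking care over the one genuinely delicate point, namely the non-triviality of $\kappa_A$ throughout the evolution.

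First I would invoke the regularity hypotheses to guarantee a solution of sufficient smoothness. Since $\kappa_{*}^A$ and $\partial_\tau\kappa_{*}^A$ lie in $H^m(\Sigma_0,\mathbb{C}^2)$ with $m\geq 4$ and $(\kappa_{*}^A,\partial_\tau\kappa_{*}^A)\in D_\delta$, Remark \ref{ID-regularity_remark} --- through point (i) of Theorem 2 of \cite{GasVal15} --- furnishes a $T>0$ and a unique solution $\kappa_A\in C^{m-2}([0,T)\times\Sigma_0,\mathbb{C}^2)$ of the twistor candidate equation \eqref{TwistorCandidate} with Cauchy data $(\kappa_{*}^A,\nabla_\tau\kappa_{*}^A)$, the time derivative being fixed by \eqref{prescriptionTimeDer}. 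Because $m\geq 4$ the solution is at least $C^2$, which is precisely the regularity demanded by the continuity argument below. Next I would apply Proposition \ref{prop:Twistor_PP_wave_InitialData}: since the initial data satisfy \eqref{TwistorAugmented_ID_1} and \eqref{TwistorAugmented_ID_2} (equivalently $D_{AB}\kappa_{*C}=0$ together with the Buchdahl-type condition) with time derivative prescribed by \eqref{prescriptionTimeDer}, the solution $\kappa_A$ is covariantly constant, $\nabla_{AA'}\kappa_B=0$, on an open set $\mathcal{U}\subset\mathcal{D}^{+}(\Sigma_0)$. A covariantly constant spinor is in particular a twistor for which $\bar\eta_{A'}=\nabla_{A'}{}^{A}\kappa_A=0$, placing it squarely in the hypotheses of Proposition \ref{prop1_ppwave}.

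The hard part, and the only step requiring real care, is that Proposition \ref{prop1_ppwave} presupposes a \emph{non-trivial} twistor, whereas covariant constancy by itself does not preclude $\kappa_A$ from vanishing at some point of the development. This is exactly where the $C^2$ regularity is spent: by the continuity argument of Remark \ref{Remark_continuity}, since $\kappa_{*}^A\neq 0$ everywhere on $\Sigma_0$ and $\kappa_A$ depends continuously on the spacetime point, there is a (possibly very small) neighbourhood $\mathcal{U}$ of $\Sigma_0$ on which $\kappa_A\neq 0$. Shrinking $\mathcal{U}$ if necessary, $\kappa_A$ is therefore a non-trivial twistor with $\bar\eta_{A'}=0$ on $\mathcal{U}$.

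Finally, with a non-trivial covariantly constant spinor in hand on $\mathcal{U}$, Proposition \ref{prop1_ppwave} identifies the development on $\mathcal{U}$ as a pp-wave spacetime: the null vector $\kappa^A\bar\kappa^{A'}$ is covariantly constant, which yields a local chart $(u,r,x^i)$ in which the metric assumes the stated pp-wave form with profile $\mathcal{H}$, and the vacuum field equations force $\mathcal{H}$ to obey the harmonic condition \eqref{HppwaveHarmonic} in $\mathcal{U}$. I expect no new calculation beyond what has already been carried out; the entire substance of the result rests on the propagation identity leading to \eqref{curlfree_prop_equation} and on the reduction of the augmented initial data to $D_{AB}\kappa_{*C}=0$, both established above, with the continuity/regularity discussion supplying the bridge from covariant constancy to a genuine pp-wave on a small neighbourhood of $\Sigma_0$.
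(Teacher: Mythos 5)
Your proposal is correct and follows essentially the same route as the paper, which obtains the theorem precisely by combining Remark \ref{ID-regularity_remark} (existence of a $C^{m-2}\supset C^2$ solution), Proposition \ref{prop:Twistor_PP_wave_InitialData} (covariant constancy of $\kappa_A$ in $\mathcal{U}$), the continuity argument of Remark \ref{Remark_continuity} (non-triviality after shrinking $\mathcal{U}$), and Proposition \ref{prop1_ppwave} (the pp-wave conclusion). Your write-up correctly identifies the non-triviality of $\kappa_A$ off the initial hypersurface as the one delicate point and resolves it exactly as the paper does.
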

    
 \subsection*{Acknowledgements}
E. Gasper\'in holds a FCT (Portugal) investigator grant
2020.03845.CEECIND. We have profited from scientific discussions with
Filipe C. Mena.  This project was supported by the H2020-MSCA-2022-SE
project EinsteinWaves, GA No. 101131233. E. Gasper\'in has also benefited
from the RISE project at CENTRA through the European Union’s H2020 ERC
Advanced Grant “Black holes: gravitational engines of discovery” grant
agreement no. Gravitas–101052587, also the Villum Foundation (grant
no. VIL37766) and the DNRF Chair program (grant no. DNRF162) by the
Danish National Research Foundation and the Marie Sklodowska-Curie
grant agreement No 101007855 and No 101131233.
We are grateful for the comments of the anonymous referees on this note.

\normalem
\bibliographystyle{unsrt}


\end{document}